\newtheorem{theorem}{Theorem}
\newtheorem{corollary}{Corollary}
\newtheorem{example}{Example}
\def\H{\mathcal{H}}
\def\L{\mathcal{L}}
\def\uu{\mathfrak{u}}
\def\su{\mathfrak{su}}
\def\SU{\mathfrak{SU}}
\def\UU{\mathfrak{U}}
\def\LL{\mathfrak{L}}
\def\vec#1{\mathbf{#1}}
\def\ket#1{| #1 \rangle}
\def\bra#1{\langle #1 |}
\def\op#1{\hat{#1}}
\def\norm#1{\| #1 \|}
\def\Tr{\mathop{\rm Tr}}
\def\Had{\mathop{\rm Had}}
\def\CNOT{\mathop{\rm CNOT}}
\def\sx{\vec{\sigma}_x}
\def\sz{\vec{\sigma}_z}
\begin{document}
\title{Global controllability with a single local actuator}
\author{S.~G.~Schirmer}\email{sgs29@cam.ac.uk}
\author{I.~C.~H.~Pullen}
\author{P.~J.~Pemberton-Ross} 
\address{Dept of Applied Maths and Theoretical Physics,
University of Cambridge, Wilberforce Rd, Cambridge, CB3 0WA, UK.}
\date{\today}
\begin{abstract}
We show that we can achieve global density-operator controllability for most $N$-dimensional bilinear Hamiltonian control systems with general fixed couplings using a \emph{single, locally-acting} actuator that modulates one energy-level transition. Controllability depends upon the position of the actuator and relies on the absence of either decompositions into non-interacting subgroups or symmetries restricting the dynamics to a subgroup of $\SU(N)$.  These results are applied to spin-chain systems and used to explicitly construct control sequences for a single binary-valued switch actuator.  
\end{abstract}
\pacs{03.67.-a,75.10.Pq,78.67.Lt,02.20.Yy,07.05.Dz}
\maketitle

\section{Introduction}

The ability to control the dynamics of quantum systems is a long
established objective in areas as diverse as molecular chemistry and
quantum computing among others.  Control in practice comprises various
related tasks such as transforming a system from a given initial state
to a desired target state, implementing a desired unitary operator, or
optimising the expectation value of a selected observable.  The manner
in which control is effected depends on the system but a common approach
for quantum systems is the application of external electromagnetic
fields.  In the diabatic control regime these drive transitions between
different states of the system, and control can be achieved by adjusting
the amplitude and phase of the field(s) as a function of time in a way
that maximizes constructive interference of various excitation pathways
that lead to a desired outcome, while maximizing destructive interference
for all others.

Although the ultimate goal of control is usually to find a control field
that steers the system in the manner required to achieve the objective,
the question of what tasks can be accomplished for a given system with a
given set of actuators, is of fundamental interest.  A key concept in
this regard is that of controllability.  A substantial number of papers 
have been devoted to studying this issue for both classical and quantum
systems, establishing various notions of controllability and general 
algebraic criteria for them, and showing that particular types of systems 
are controllable~\cite{72JS,JMP24p2608,PRA51p0960,qph0106128,PRA63n063410,
JPA35p4125,qph0302121,JOB7pS293}.  On the latter front, it has been shown, 
for instance, that any system with distinct transition frequencies and a 
connected transition graph is controllable~\cite{CP267p001,JMP43p2051}.  
For an $n$-level system this requires at least $n-1$ transitions with
non-zero probabilities.  It has also been shown that these requirements 
can be relaxed in many cases~\cite{JPA34p1679}, and more recently indirect 
controllability has been studied~\cite{PRA75n052317}.

One remaining area of interest is global controllability with a small
number of local actuators.  A motivation for this type of scenario could
be a chain or array of quantum dots with control electrodes to locally
manipulate the dynamics of one or a few quantum dots.  In the ideal
case, one might consider separate control electrodes for each quantum
dot, as well as separate electrodes to modulate all the interactions
between pairs of adjacent quantum dots, as proposed by Kane
in~\cite{NAT393p133} and many other quantum computing architectures
since.  Leaving aside the often considerable challenge of finding
optimal control schemes and fighting decoherence, with sufficiently many
local actuators almost any (Hamiltonian) quantum system is controllable,
at least in principle.  However, in many cases it is impractical or even
impossible to have a large number of individual local actuators such as
control electrodes.  Rather, one would like to make do with as few local
actuators as possible to simplify the engineering design and reduce
deleterious effects such as decoherence and crosstalk, for example.

Motivated by this problem we investigate the question of controllability
of a finite-dimensional model system with the smallest number of simple
actuators whose effect is strictly confined a local perturbation of the
Hamiltonian.  We also note here that by local we mean localized in space, 
affecting a single transition, for instance, not simultaneous local 
operations on many individual elements such as qubits as is common in 
global control schemes.  We effectively show that in most cases a single 
local actuator is sufficient to ensure controllability of the system as 
a whole, provided the latter is not decomposable into non-interacting 
parts, and does not exhibit dynamical symmetries that its evolution to a 
subgroup of the unitary group ($\UU(N)$ or $\SU(N)$).  Many systems with 
fixed interactions connecting its parts such as chains of quantum dots 
etc with fixed non-zero couplings between adjacent dots satisfy this 
connectedness requirement, and the disorder present in most realistic 
systems is likely to ensure that there are no special dynamical symmetries 
to worry about in most cases.  For these systems our controllability
analysis suggests that the entire system can be controlled by modulating
a single transition with a local actuator.  Although the explicit
controllability proofs given apply to specific model systems, the same
arguments are applicable to many other model systems, suggesting that a
large class of systems with fixed couplings may be controllable using a
very small number of fixed local actuators.  We conclude with an explicit 
example of constructive control with a single binary switch actuator.

\section{Definitions and basic results on controllability}

We restrict ourselves here to control problems that can be classified as
open-loop Hamiltonian engineering problems and systems subject to
Hamiltonian dynamics.  Open-loop control engineering means that we aim
to design control fields relying only on (presumed) knowledge of the
initial state of the system and the dynamic laws governing its evolution
in the presence of the control fields, without any feedback from
measurements.  We furthermore assume the state space of the system is
a finite-dimensional Hilbert space, $\H\simeq \CC^N$.  The state of the
system in this case can be represented by a density operator $\op{\rho}$, 
i.e., a positive unit-trace operator acting on $\H$, and its evolution
is governed by the quantum Liouville equation
\begin{equation}
\label{quantum Liouville eqn}
i \hbar \frac{d}{dt} \hat{\rho}(t)
 = \left[\hat{H}[\mathbf{f}(t)],\hat{\rho}(t) \right]+
i \hbar \L_D [\hat{\rho}(t)], 
\end{equation}
where $[A,B]=AB-BA$ is the usual matrix commutator and $\L_D=0$ for a
Hamiltonian control system.  The operator $\hat{H}[\mathbf{f}(t)]$ is 
the total Hamiltonian of the system subject to the control fields 
$\vec{f}(t)$.  For control-linear systems we have the perturbative
expansion
\begin{equation}
\label{Hamiltonian}
  \hat{H}[\mathbf{f}(t)]=\hat{H}_0+ \sum_{m=1}^Mf_m(t)\hat{H}_m, 
\end{equation}
where $\hat{H}_0$ is the internal Hamiltonian of the system and
$\hat{H}_m$, $m>0$, are the interaction terms.  

Hamiltonian dynamics constrains the evolution of density operators 
$\rho(t)$ to isospectral flows 
\begin{equation}
\hat{\rho}(t)=\hat{U}(t,t_0)\hat{\rho}_0\hat{U}(t,t_0)^{\dagger},
\end{equation}
since the evolution operator $\hat{U}(t,t_0)$ must satisfy the related 
Schrodinger equation
\begin{equation}
i\hbar \frac{d}{dt}\hat{U}(t,t_0)=\hat{H}[\mathbf{f}(t)]\hat{U}(t,t_0)
\end{equation}
and is hence restricted to the unitary group $\UU(N)$.  Due to this 
fundamental restriction it is clear that the maximum degree of state
control we can achieve for this system is the ability to interconvert
density operators with the same spectrum, which is achieved if we can
implement any unitary operator in the special unitary group $\SU(N)$
of unitary operators with determinant $1$ as abelian factors do not
affect the isospectral flow.  It is also not difficult to show that 
any proper subgroup of $\SU(N)$ is not sufficient to interconvert any
two generic density operators with the same spectrum.  

To properly define the notion of controllability we need some concepts
from Lie group / algebra theory.  A \emph{Lie algebra} is a vector 
space over a field endowed with a bilinear composition $[x,y]$ that 
satisfies the Jacobi identity 
\begin{equation*}
 [[x,y],z]+[[y,z],z]+[[z,x],y]=0.
\end{equation*}
It is easy to see that the anti-Hermitian matrices $iH_0$ and $iH_1$
generate a Lie algebra $\LL$ which must be a subalgebra of the Lie 
algebra of skew-hermitian matrices $\uu(N)$, and if $iH_0$ and $iH_1$ 
have zero trace, $\LL$ will be a subalgebra of the trace-zero,
anti-Hermitian matrices $\su(N)$, which can be regarded as the tangent
space to the Lie group $\SU(N)$ at the identity via the exponential
map $x\in\su(N)\mapsto\exp(x)\in\SU(N)$.  Therefore, we can argue that 
if the $i\hat{H}_0$ and $i\hat{H}_1$---or their trace-zero counterparts 
$\tilde{H}_m=\hat{H}_m-N^{-1}\Tr(\hat{H}_m)I_N$---generate the entire 
Lie algebra $\su(N)$ then we can in principle dynamically generate any 
matrix $\op{U}\in \SU(N)$.  Hence, a system is said to be density matrix 
controllable or simply \emph{controllable} if the Lie algebra generated 
by $i\tilde{H}_0$ and $i\tilde{H}_1$ is $\su(N)$.  These Lie algebraic
criteria are useful as they are easy to check by quite straightforward 
calculations.  In principle, these can be done numerically for a given 
set of Hamiltonians but for higher dimensional systems the calculations 
can be time-consuming and the accuracy very limited.  It is therefore
desirable to have more explicit criteria that guarantee controllability
for certain classes of systems, and several such results exist.  

For example, consider a simple finite-dimensional system ($\dim \H=N$) 
with a control-linear Hamiltonian of the form $\op{H}_0+f(t)\op{H}_1$.
Choose a basis such that $H_0$ is diagonal with energy levels $E_n$, 
$n=1,\ldots,N$, and transition frequencies $\omega_{mn}=E_{n}-E_m$.  If 
$\op{H}_0$ is regular, i.e., has non-degenerate eigenvalues then we can
associate each 1-dimensional eigenspace with the vertex of a graph and
interpret the non-zero elements in the matrix representation of the
interaction Hamiltonian $\op{H}_1$ (with respect to the eigenbasis of 
$H_0$) as edges of a transition graph.  In this case a sufficient 
condition for controllability is that $H_0$ be strongly regular, i.e.,
have distinct transition frequencies $\omega_{mn}\neq\omega_{m'n'}$ 
unless $(m,n)=(m',n')$, and the transition graph as defined above be 
connected~\cite{CP267p001}.  The conditions of uniqueness of the transition
frequencies can be slightly relaxed in that we only need to consider 
the transition frequencies of those transitions that occur with non-zero 
probability.  This is a useful result as it is very easy to check, 
although it is important to remember that it provides only a sufficient, 
not a necessary condition, and indeed many systems that do not satisfy 
these conditions are controllable. For example, given a system with a
Hamiltonian of the form $\op{H}[f(t)]=\op{H}_0+f(t)\op{H}_1$, where
\begin{subequations}
\label{eq:H}
\begin{align}
  H_0 &= \sum_{n=1}^N E_n \ket{n}\bra{n}, \\
  H_1 &= \sum_{n=1}^{N-1} d_n [\ket{n}\bra{n+1}+\ket{n+1}\bra{n}],
\end{align}
\end{subequations}
the graph connectivity result allows us to conclude that the system is
controllable provided the energy levels of the system are such that the
frequencies of all transitions between adjacent states are distinct and
$d_n\neq0$ for $n=1,2,\ldots,N-1$.  In principle this controllability
result can be explained in terms of frequency-selective control.  If all
the possible transitions have different frequencies then we can imagine
a field resonant with a particular transition frequency as selectively
driving only the resonant transition.  Thus, we can implement all the
generators $\op{x}_{n,n+1}$ and $\op{y}_{n,n+1}$ of the Lie algebra and
therefore the entire Lie algebra, implying controllability.  However, it
has been shown that many systems that do not satisfy this condition such
as the truncated harmonic oscillator with $d_n=\sqrt{n}$ are 
controllable despite \emph{all} allowed transitions having the same
frequency ~\cite{JPA34p1679}.  What permits controllability in this
case, despite the lack of any frequency selectivity, are the differences
in the transition strengths $d_n$.  If both the frequencies and
strengths are the same for all transitions, or even if they satisfy
certain not necessarily obvious symmetries, then controllability is
indeed lost~\cite{JPA35p2327}.

Furthermore, both the graph connectivity and the explicit Lie algebraic
results mentioned above apply only to systems where the field drives all
the possible transitions, i.e., is global as shown in Fig.~\ref{fig:1a}.
Given a particular control field that drives only a single transition,
or a subset of all possible transitions (see Fig.~\ref{fig:1b}), we
cannot draw any conclusions even if the transition graph of the system
appears connected and the transition frequencies are distinct.  Of
course, if we have many local control fields, each selectively driving a
single transition, as shown in Fig~\ref{fig:1c} then it is again obvious
that the system is controllable.  It is not obvious, however, under
which conditions the ability to control a single transition of a large
connected system as in Fig~\ref{fig:1b} is sufficient for
controllabilty.

\begin{figure}
\subfloat[Global control field simultaneously driving all transitions]{\scalebox{0.4}{\includegraphics{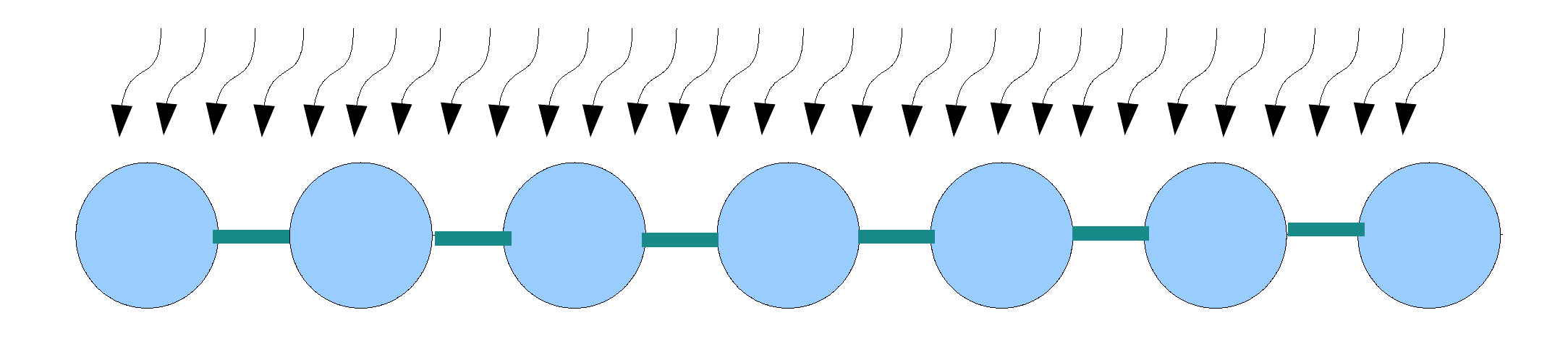}} \label{fig:1a}}\\
\subfloat[Single local actuator driving a single transition]{\scalebox{0.4}{\includegraphics{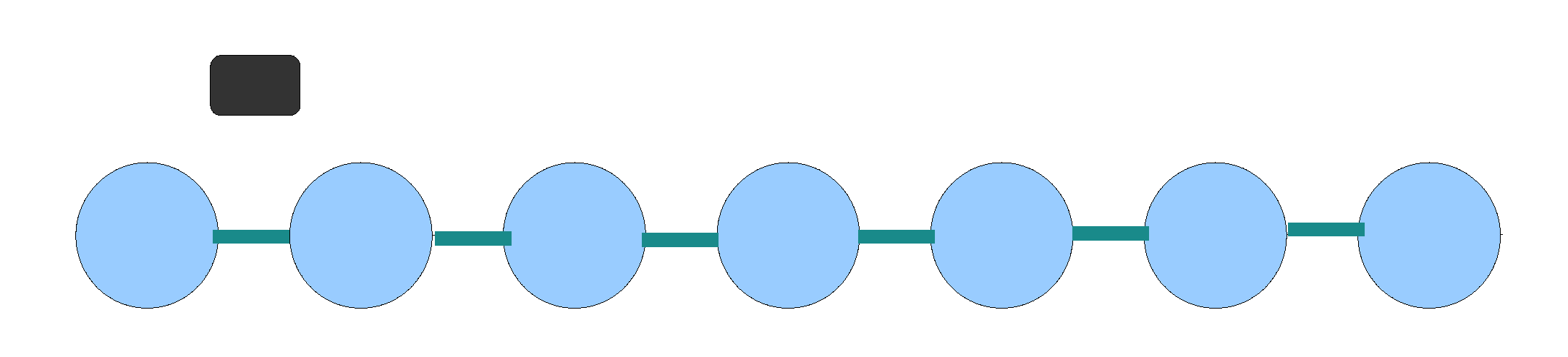} \label{fig:1b}}}\\
\subfloat[Many local actuators driving individual transitions]{\scalebox{0.4}{\includegraphics{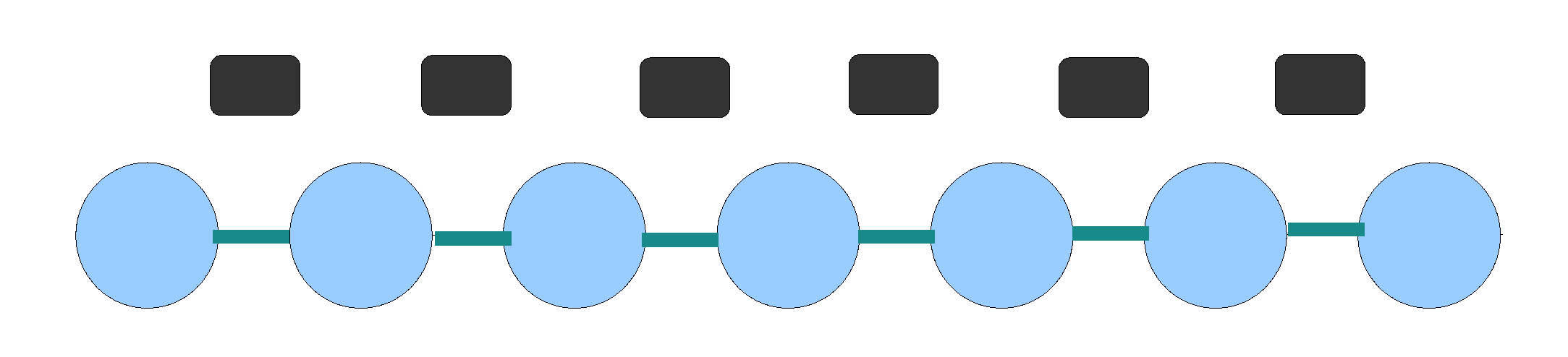} \label{fig:1c}}}\\
\caption{Schematic of $N$-state system with coupling between adjacent
 states with (a) global control field, (b) a single local actuator and
(c) many local actuators.}
\end{figure}

\section{Controllability for Single Local Actuator}

In this section we consider a model system with $N$ distinct states that 
are permanently coupled in some form, such as an array of quantum dots.  
For our model we first assume coupling of nearest neighbour type, which 
leads to a drift Hamiltonian of the form
\begin{equation}
 \label{eq:A0}
  A_0=H_0+H_1
\end{equation}
with $H_0$ and $H_1$ as in Eq.~(\ref{eq:H}), and a single local actuator 
modulating the coupling between states $\ket{r}$ and $\ket{r+1}$,
\begin{equation}
 \label{eq:Ar}
  A_r=\ket{r}\bra{r+1}+\ket{r+1}\bra{r}.
\end{equation}  
Thus, for a single local actuator positioned between $r$ and $r+1$ 
we have the total Hamiltonian
\begin{equation}
\label{eq:Htotal}
  H[f(t)] = A_0 + f(t) A_r.
\end{equation} 

\begin{example}
The Hamiltonian of the first excitation subspace of a spin chain of
length $N$ with nearest neighbour coupling of isotropic Heisenberg
form given by the coupling constants $d_n>0$ for $n=1,\ldots,N-1$ is
\begin{equation}
 \label{eq:Hspin}
  A_0 = \sum_{n=1}^N E_n \ket{n}\bra{n}
     +\sum_{n=1}^{N-1} d_n [\ket{n}\bra{n+1}+\ket{n+1}\bra{n}],
\end{equation}
where the energy levels are explicitly
\begin{equation} 
 \label{eq:Espin}
 E_n = \frac{1}{2}\sum_{\ell\neq n-1,n} d_\ell -\frac{1}{2}(d_{n-1}+d_n)
\end{equation}
and we set $d_0=d_N=0$.  Assuming we have a local actuator that 
allows us to modulate the coupling between spins $r$ and $r+1$,
the total Hamiltonian is of the form Eq.~(\ref{eq:Htotal}) with 
$A_r$ of the form Eq.~(\ref{eq:Ar}).  The first excitation subspace
Hamiltonian for spin chains with dipole-dipole interactions is also
of form~(\ref{eq:Hspin}) but with different energy levels.  Similar 
results hold for any spin chain decomposable into excitation subspaces.
\end{example}

Again, it is quite obvious that $N-1$ independent local actuators 
of this type, controlling the coupling between spins $n$ and $n+1$ 
in the chain, will suffice for the system to be controllable, but 
in fact, a single such actuator suffices in most cases.

\begin{theorem}
\label{thm:1}
A quantum system with Hamiltonian $H[f(t)]=A_0+f(t)A_r$ with $A_0$ 
and $A_r$ as above is controllable if $\omega_r \neq 0$, $d_n \neq 0$ 
and $d_{r+1}^2 \neq d_{r-1}^2$.
\end{theorem}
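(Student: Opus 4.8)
The plan is to show that the Lie algebra $\LL$ generated by $iA_0$ and $iA_r$ is all of $\su(N)$ by manufacturing, edge by edge, the standard generators $\op{x}_{jk}=i(\ket{j}\bra{k}+\ket{k}\bra{j})$, $\op{y}_{jk}=\ket{j}\bra{k}-\ket{k}\bra{j}$ and the diagonal $\op{z}_{jk}=i(\ket{j}\bra{j}-\ket{k}\bra{k})$ for every nearest-neighbour pair; since these span $\su(N)$ once the chain is connected, that suffices. Writing $iA_0=D+C$ with $D=i\sum_n E_n\ket{n}\bra{n}$ and $C=\sum_n d_n \op{x}_{n,n+1}$, the actuator supplies $\op{x}_{r,r+1}=iA_r$ for free. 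First I would secure full $\su(2)$ control on the actuator edge. Commuting $\op{x}_{r,r+1}$ with $iA_0$ yields $\op{y}_{r,r+1}$ with coefficient $\omega_r$ from the $D$-part, but contaminated by the next-nearest terms $\op{y}_{r-1,r+1}$ and $\op{y}_{r,r+2}$ arising from the two couplings of $C$ flanking the edge. These contaminants are eigenvectors of $\mathrm{ad}_{\op{x}_{r,r+1}}^2$ with eigenvalue $-1$, whereas $\op{y}_{r,r+1}$ has eigenvalue $-4$, so applying $(\mathrm{ad}_{\op{x}_{r,r+1}}^2+1)$ annihilates the former and isolates $\op{y}_{r,r+1}$ (this is where $\omega_r\neq0$ enters). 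Then $\op{z}_{r,r+1}$ follows from $[\op{x}_{r,r+1},\op{y}_{r,r+1}]$.

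The crux is to break the left--right symmetry about the actuator edge and peel off the two neighbouring edges \emph{separately}. Because $\op{z}_{r,r+1}$ is diagonal it commutes with $D$, so $[\op{z}_{r,r+1},iA_0]=[\op{z}_{r,r+1},C]$ is purely nearest-neighbour; after subtracting the already-available $\op{y}_{r,r+1}$ this gives $W\propto d_{r-1}\op{y}_{r-1,r}+d_{r+1}\op{y}_{r+1,r+2}$. Since its two summands have disjoint support, the commutators $W'=[W,\op{z}_{r,r+1}]\propto d_{r-1}\op{x}_{r-1,r}+d_{r+1}\op{x}_{r+1,r+2}$, then $V=[W,W']\propto d_{r-1}^2\op{z}_{r-1,r}+d_{r+1}^2\op{z}_{r+1,r+2}$, and finally $[V,W']\propto d_{r-1}^3\op{y}_{r-1,r}+d_{r+1}^3\op{y}_{r+1,r+2}$, all remain free of cross terms and produce a second, independent weighting of $(\op{y}_{r-1,r},\op{y}_{r+1,r+2})$ built from the coupling constants alone. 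I can then solve for the two terms individually exactly when $\det\left(\begin{smallmatrix} d_{r-1}&d_{r+1}\\ d_{r-1}^3&d_{r+1}^3\end{smallmatrix}\right)=d_{r-1}d_{r+1}(d_{r+1}^2-d_{r-1}^2)\neq0$, which is precisely the hypothesis $d_{r\pm1}\neq0$ together with $d_{r+1}^2\neq d_{r-1}^2$. This separation is the step I expect to be the genuine obstacle; note that the construction is deliberately routed through the $d_n$ rather than through a commutator with $D$, which explains why the condition constrains the couplings and not the frequencies $\omega_n$. Having isolated $\op{y}_{r-1,r}$ (and symmetrically $\op{y}_{r+1,r+2}$), the commutators $[\op{z}_{r,r+1},\op{y}_{r-1,r}]$ and $[\op{x}_{r-1,r},\op{y}_{r-1,r}]$ give $\op{x}_{r-1,r}$ and $\op{z}_{r-1,r}$, yielding full $\su(2)$ control on edges $r-1$ and $r+1$.

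Finally I would propagate outward by induction. Once two adjacent edges are fully controlled, commuting the diagonal generator of the inner edge with $iA_0=D+C$ again produces only nearest-neighbour $\op{y}$'s---on the two flanking edges together with the next edge out; subtracting the already-controlled contributions leaves the new edge's $\op{y}$ with coefficient $d_n\neq0$, whence its $\op{x}$ and $\op{z}$ follow as before. No further nondegeneracy is needed at this stage, only $d_n\neq0$, so iterating reaches both ends of the chain and delivers $\op{x}_{n,n+1},\op{y}_{n,n+1},\op{z}_{n,n+1}$ for all $n$. The $\op{z}_{n,n+1}$ exhaust the Cartan subalgebra, and repeated commutators of the nearest-neighbour $\op{x}$'s generate every $\op{x}_{jk}$ and $\op{y}_{jk}$, so $\LL=\su(N)$ and the system is controllable. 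The boundary positions $r=1$ and $r=N-1$ are handled automatically: there $d_{r-1}=d_0=0$ or $d_{r+1}=d_N=0$, so $W$ already reduces to a single term, no separation is required, and the inequality $d_{r+1}^2\neq d_{r-1}^2$ holds trivially.
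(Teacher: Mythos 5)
Your proposal is correct and takes essentially the same approach as the paper's proof: your $(\mathrm{ad}_{x_{r,r+1}}^2+1)$ filtering is exactly the paper's combination $x_r=(3\omega_r)^{-1}(X_0+X_0')$, your chain $W,\,W',\,V,\,[V,W']$ reproduces the paper's $Y_1,\,X_1,\,Z_1,\,Y_1'$ with the identical Vandermonde-type separation governed by $d_{r-1}d_{r+1}(d_{r+1}^2-d_{r-1}^2)\neq 0$ (the paper's $c_1\neq0$), and your outward induction is the paper's sweep $x_{r+k+1}=d_{r+k+1}^{-1}[h_{r+k},V_0^{(k+1)}]$ with already-generated $y_n$'s subtracted from the drift. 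The only deviations are cosmetic --- you obtain the flanking pair from $[z_{r,r+1},iA_0]$ rather than from $3^{-1}(4Y_0+Y_0')$, and your uniform propagation bypasses the paper's special $Z_1$-based step for the $r\pm2$ edges --- so nothing further is needed.
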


\begin{proof}
We show that the trace-zero anti-Hermitian matrices $iV_0$ and $iV_r$
defined by
\begin{align*}
 V_0 = A_0 - \frac{\Tr(A_0)}{N} I_N, \quad
 V_1 = A_r - \frac{\Tr(A_r)}{N} I_N
\end{align*}
generate the Lie algebra $\LL=\su(N)$.  To this end it suffices to 
show that the Lie algebra $\LL$ contains the $2(n-1)$ generators 
$x_n \equiv x_{n,n+1}$ and $y_n \equiv y_{n,n+1}$ of $\su(N)$, where 
the basis elements of $\su(N)$ are defined as usual,
\begin{align*}
 x_{mn} &= \ket{n}\bra{m}-\ket{m}\bra{n}, \\
 y_{mn} &= i(\ket{n}\bra{m}+\ket{m}\bra{n}), \\
 h_{n}  &= \ket{n}\bra{n}-\ket{n+1}\bra{n+1},
\end{align*}
for $1\le m<n\le N$.  Let $V_0^{(0)}=i(V_0-d_r V_1)$.  We have 
$iV_1=y_r\in\LL$ and
\begin{equation*}
\begin{split}
 X_0 &\equiv [y_r, V_0^{(0)}]=d_{r-1}x_{r-1,r+1}-d_{r+1}x_{r,r+2}-\omega_r x_r \\
 Y_0 &\equiv [X_0, y_r]=d_{r-1}y_{r-1}+d_{r+1}y_{r+1}-2\omega_r h_r\\
 X_0'&\equiv [Y_0, y_r]=-d_{r-1}x_{r-1,r+1}+d_{r+1}x_{r,r+2}+4\omega_r x_r\\
 Y_0'&\equiv [X_0', y_r]=-d_{r-1}y_{r-1}-d_{r+1}y_{r+1}+8\omega_r h_r
\end{split}
\end{equation*}
yields $x_r=(3 \omega_r)^{-1} (X_0+X_0')\in\LL$ 
and $h_r=2^{-1}[x_r,y_r] \in\LL$ as $\omega_r\neq0$.  Next setting
\begin{align*}
 Y_1 &\equiv 3^{-1}(4Y_0+Y_0')=d_{r-1}y_{r-1}+d_{r+1}y_{r+1}\\
 X_1 &\equiv [[x_r,Y_1],y_r]=d_{r-1}x_{r-1}+d_{r+1}x_{r+1}\\
 Z_1 &\equiv 2^{-1}[X_1,Y_1]=d^2_{r-1}h_{r-1}+d^2_{r+1}h_{r+1}\\
 Y_1'&\equiv 2^{-1}[Z_1,X_1]=d^3_{r-1}y_{r-1}+d^3_{r+1}y_{r+1}\\
 X_1'&\equiv 2^{-1}[Y_1,Z_1]=d^3_{r-1}x_{r-1}+d^3_{r+1}x_{r+1},
\end{align*}
and $c_1=d_{r-1}^2-d_{r+1}^2$ leads to
\begin{align*}
 Y_1'-d_{r+1}^2Y_1&= d_{r-1} c_1 y_{r-1}, \\
 X_1'-d_{r+1}^2X_1&= d_{r-1} c_1 x_{r-1}, \\
 Y_1'-d_{r-1}^2Y_1&= -d_{r+1} c_1 y_{r+1}, \\
 X_1'-d_{r-1}^2X_1&= -d_{r+1} c_1 x_{r+1}.
\end{align*}
Since $d_{r\pm 1}\neq 0$, $c_1\neq0$ by hypothesis, we have 
$y_{r\pm 1}$, $x_{r\pm 1}$, and $h_{r\pm1}=2^{-1}[x_{r\pm1},y_{r\pm1}]$
in $\LL$.  Next note that
\begin{equation*}
  V_0^{(1)}\equiv V_0^{(0)}-Y_1 = iH_0 +\sum_{n \in I^{(1)}} d_n y_n
\end{equation*}
where $I^{(1)}$ is the index set $\{1,\ldots,N-1\}$ minus the subset
$\{r-1,r,r+1\}$ and we have 
\begin{align*}
  Y_2'     &\equiv [[Z_1,V_0^{(1)}],Z_1 ]
                    = d_{r-2}d_{r-1}^4 y_{r-2} + d_{r+1}^4 d_{r+2}y_{r+2}\\
 V_0^{(2)} &\equiv V_0^{(1)}-d_{r-1}^{-4}Y_2'
                    = iH_0 + \sum_{n \in I^{(2)}} d_n y_n + c_{r+2}y_{r+2},
\end{align*} 
with $I^{(2)}$ the index set $I^{(1)}$ minus $\{r-2,r+2\}$ and 
$c_{r+2}=d_{r+2}(1-d_{r+1}^4/d_{r-1}^4)$.  Hence
\begin{align*}
X_2 &\equiv [Z_1,V_0^{(2)}] = d_{r+1}^2 c_{r+2}x_{r+2}, \\
Y_2 &\equiv [X_2,Z_1]       = d_{r+1}^4 c_{r+2}y_{r+2}
\end{align*}
shows $x_{r+2}$, $y_{r+2}$ and $h_{r+2}=2^{-1}[x^{r+2},y_{r+2}]$ in $\LL$.  
Setting $V_0^{(3)}=V_0^{(2)}-d_{r+2}y_{r+2}$ now shows that
\begin{align*}
x_{r+3}&=d_{r+3}^{-1}[h_{r+2},V_0^{(3)}] \in \LL, \\
y_{r+3}&=[x_{r+3},h_{r+2}]\in \LL, \\
h_{r+3}&=2^{-1}[x_{r+3},y_{r+3}] \in \LL.
\end{align*}
Repeating this procedure with $V_0^{(k+1)}=V_0^{(k)}-d_{r+k}y_{r+k}$
we obtain
\begin{align*}
x_{r+k+1}&=d_{r+k+1}^{-1}[h_{r+k},V_0^{(k+1)}]\in \LL\\
y_{r+k+1}&=[x_{r+k+1},h_{r+k}]\in \LL \\
h_{r+k+1}&=2^{-1}[x_{r+k+1},y_{r+k+1}] \in \LL
\end{align*}
for $3 \leq k \leq N-r-2$.  
To show that the elements $x_{r-k}$, $y_{r-k}$ for $2 \leq k \leq r-1$ 
are in $\LL$, we note that
\begin{align*}
y_{r-2}&=d_{r-2}^{-1}d_{r-1}^{-4}(Y_2'-d_{r+1}^4d_{r+2}y_{r+2})\in\LL \\
x_{r-2}&=[h_{r-1},y_{r-2}]\in\LL \\
h_{r-2}&=2^{-1}[x_{r-2},y_{r-2}]\in\LL
\end{align*}
and setting $W_0^{(2)}=V_0^{(N-r-1)}$ and
$W_0^{(k+1)}=W_0^{(k)}- d_{r-k-1}y_{r-k-1}$ shows
\begin{align*}
x_{r-k-1}&= d_{r-k-1}^{-1}[h_{r-k},W_0^{(k)}]\in\LL\\
y_{r-k-1}&= [x_{r-k-1},h_{r-k}]\in\LL\\
h_{r-k-1}&= 2^{-1}[x_{r-k-1},y_{r-k-1}]\in\LL
\end{align*}
for $2\leq k \leq r-2$, as desired.
\end{proof}

For a Heisenberg spin chain $d_n>0$ for $n=1,\ldots, N-1$ and 
Eq.~(\ref{eq:Espin}) show that $\omega_n=d_{n-1}-d_{n+1}$.  Thus 
$\omega_r\neq 0$ is equivalent to $d_{r+1}\neq d_{r-1}$ and we 
have the following 

\begin{corollary}
The first excitation subspace of a Heisenberg spin chain of length 
$N$ with coupling constants $d_n$ is controllable with single local 
actuator between spins $r$ and $r+1$ if $d_{r+1}\neq d_{r-1}$.
\end{corollary}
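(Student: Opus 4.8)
The plan is to invoke the Lie-algebraic controllability criterion established above: it suffices to show that the Lie algebra $\LL$ generated by the trace-zero anti-Hermitian matrices $iV_0$ and $iV_r$ is all of $\su(N)$. Since the adjacent generators $\{x_n,y_n,h_n : 1\le n\le N-1\}$ already generate $\su(N)$ (nested commutators such as $[x_n,x_{n+1}]=-x_{n,n+2}$ produce every non-adjacent $x_{mn}$, $y_{mn}$, filling out the whole algebra), I would reduce the problem to proving that each adjacent triple $x_n,y_n,h_n$ lies in $\LL$.

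First I would identify the two available generators explicitly. The control contributes $iV_r=y_r\in\LL$, while the drift contributes $iV_0=iH_0+\sum_n d_n y_n$ up to a scalar multiple of the identity that drops out of every commutator. Subtracting the known $y_r$ piece yields a cleaner drift $V_0^{(0)}=iH_0+\sum_{n\neq r} d_n y_n$ to work with.

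The core of the argument is a local bootstrap at the actuator site followed by an outward propagation. Repeatedly commuting $y_r$ with $V_0^{(0)}$ only exposes the portions of the drift that fail to commute with $y_r$, namely the diagonal energies at sites $r,r+1$ and the couplings $d_{r-1},d_{r+1}$ to the immediate neighbours. Using $\omega_r\neq 0$ I would first extract $x_r$, and then $h_r=\tfrac12[x_r,y_r]$---this is precisely frequency selectivity at the actuator. Continuing the nested commutators produces symmetric combinations such as $d_{r-1}y_{r-1}+d_{r+1}y_{r+1}$ and, after further nesting, $d_{r-1}^3 y_{r-1}+d_{r+1}^3 y_{r+1}$; solving this $2\times2$ linear system for the individual generators is possible exactly when its determinant, proportional to $d_{r-1}^2-d_{r+1}^2$, is nonzero. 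With $x_{r\pm 1},y_{r\pm 1},h_{r\pm 1}$ in hand, I would then propagate outward by induction: subtracting the already-known $y$-terms from the drift leaves a residual whose leading coupling is $d_{r+k+1}$ (resp.\ $d_{r-k-1}$), and commuting with the neighbouring $h_{r+k}$ pulls in $x_{r+k+1}$, hence $y_{r+k+1}$ and $h_{r+k+1}$; since every $d_n\neq 0$ the induction never stalls and reaches both ends of the chain, giving $\LL=\su(N)$.

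I expect the decisive obstacle to be the local separation step. The two neighbours of the actuator enter every low-order commutator symmetrically, so no single bracket can distinguish $y_{r-1}$ from $y_{r+1}$; the construction must generate enough independent symmetric functions of $d_{r-1}$ and $d_{r+1}$ to invert a Vandermonde-type system in $d_{r-1}^2,d_{r+1}^2$, and this is exactly where the hypothesis $d_{r-1}^2\neq d_{r+1}^2$ is indispensable. If that symmetry were unbroken the dynamics would be confined to a proper subalgebra, so the bulk of the care goes into choosing the commutator sequence that breaks it, after which the outward propagation is routine.
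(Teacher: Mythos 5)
Your commutator sketch is, in outline, a faithful reconstruction of the paper's proof of Theorem~\ref{thm:1} (bootstrap at the actuator using $\omega_r\neq 0$ to extract $x_r$ and $h_r$, separation of $y_{r\pm1}$ from the symmetric combinations $d_{r-1}y_{r-1}+d_{r+1}y_{r+1}$ and $d_{r-1}^3y_{r-1}+d_{r+1}^3y_{r+1}$ with determinant proportional to $d_{r-1}d_{r+1}(d_{r-1}^2-d_{r+1}^2)$, then outward propagation via the $h_n$ and successively truncated drifts). But the statement you were asked to prove is the Corollary, and there your argument has a genuine gap: you invoke $\omega_r\neq 0$ as if it were a hypothesis, whereas the Corollary assumes only $d_{r+1}\neq d_{r-1}$. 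Nothing in your proposal connects the two, and that connection is the entire content of the Corollary. For the first excitation subspace of a Heisenberg chain the diagonal energies are not free parameters; they are fixed by the couplings through Eq.~(\ref{eq:Espin}), which gives $E_n=\frac{1}{2}\sum_{\ell}d_\ell-(d_{n-1}+d_n)$ and hence $\omega_r=E_{r+1}-E_r=d_{r-1}-d_{r+1}$. Thus $d_{r+1}\neq d_{r-1}$ is exactly equivalent to $\omega_r\neq 0$; if instead $d_{r-1}=d_{r+1}$, then $\omega_r=0$ and your very first step---extracting $x_r$ by ``frequency selectivity at the actuator''---fails, so the bootstrap never starts. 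This spectral computation, which you never perform, is what the paper's proof of the Corollary consists of.

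Concretely, the paper proves the Corollary in two lines: it computes $\omega_n=d_{n-1}-d_{n+1}$ from Eq.~(\ref{eq:Espin}), notes that for a Heisenberg chain $d_n>0$, so that the single hypothesis $d_{r+1}\neq d_{r-1}$ simultaneously delivers all three conditions of Theorem~\ref{thm:1}: $\omega_r\neq 0$, $d_n\neq 0$, and $d_{r+1}^2\neq d_{r-1}^2$, and then cites the theorem. Note that the last implication also needs justification you did not give: $d_{r+1}\neq d_{r-1}$ alone does not imply $d_{r-1}^2\neq d_{r+1}^2$ (consider $d_{r-1}=-d_{r+1}$); it is the positivity of the Heisenberg couplings that rules this case out. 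So while your Lie-algebraic machinery is sound as a re-derivation of Theorem~\ref{thm:1}, it is redundant work here, and the Heisenberg-specific reduction that the Corollary actually requires is missing; as written, the proposal does not prove the stated result from its stated hypothesis.
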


A Heisenberg spin chain with non-uniform couplings almost certainly 
satisfies $d_{r-1}^2\neq d_{r+1}^2$ for any $r$ between $1$ and $N-1$.
A chain with uniform coupling $d_n=d$, $n=1,\ldots,N-1$, $d_0=d_N=0$,
satisfies this condition only if the actuator is placed near the end
of the chain, i.e., $r=1$ or $r=N-1$.  However, we can generalize the 
previous theorem.

\begin{theorem}
\label{thm:2}
A quantum system with Hamiltonian $H[f(t)]=A_0+f(t)A_r$ with $A_0$ 
and $A_r$ as above is controllable if $\omega_r\neq 0$, $d_n\neq 0$ 
and $d_{r-k-1}^2\neq d_{r+k+1}^2$ for some $k\in\NN_0$.
\end{theorem}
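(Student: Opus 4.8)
The plan is to mirror the bracket-generating computation of Theorem~\ref{thm:1} as far as possible, altering only the single step in which that proof invoked $d_{r-1}^2\neq d_{r+1}^2$. I first note that the opening of the previous proof uses \emph{only} $\omega_r\neq 0$: the extraction of $x_r$, $h_r$ and $y_r=iV_1$ from the nested brackets $X_0,Y_0,X_0',Y_0'$, together with the construction of the symmetric pairs $Y_1=d_{r-1}y_{r-1}+d_{r+1}y_{r+1}$, $X_1=d_{r-1}x_{r-1}+d_{r+1}x_{r+1}$ and $Z_1=d_{r-1}^2h_{r-1}+d_{r+1}^2h_{r+1}$, requires no asymmetry and carries over verbatim. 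The role of $c_1=d_{r-1}^2-d_{r+1}^2\neq 0$ in Theorem~\ref{thm:1} was solely to split these locked pairs into the individual generators at $r\pm 1$; indeed even the level-two coefficient $c_{r+2}=d_{r+2}(1-d_{r+1}^4/d_{r-1}^4)$ vanishes precisely when $c_1=0$, which is exactly the regime Theorem~\ref{thm:2} must handle.

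Without loss of generality I take $k$ to be the \emph{smallest} non-negative integer with $d_{r-k-1}^2\neq d_{r+k+1}^2$; minimality forces $d_{r-j}^2=d_{r+j}^2$ for every $j=1,\dots,k$ (with the convention $d_0=d_N=0$ automatically breaking the symmetry once a boundary is reached). The heart of the argument is then an induction on the level $j\le k$: I would show that iterating the $Y_2'$-type climb of the previous proof on the symmetric object $Z_j$ produces, at each level, combinations of the locked form $\alpha_j\,(d_{r-j}g_{r-j}+d_{r+j}g_{r+j})$ with $g\in\{x,y\}$ and $\beta_j\,(d_{r-j}^2h_{r-j}+d_{r+j}^2h_{r+j})$. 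The intermediate equalities $d_{r-j}^2=d_{r+j}^2$ are exactly what keep the two ends tied together with coefficients of equal magnitude, so that the hop from level $j$ to level $j+1$ preserves the symmetric form instead of forcing a premature separation for which no asymmetry is yet available; because only even powers of the $d$'s enter the locking coefficients, a possible relative sign $d_{r+j}=-d_{r-j}$ is harmless and $d_n\neq 0$ suffices.

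At level $k+1$ the surviving combination is proportional to $d_{r-k-1}g_{r-k-1}+d_{r+k+1}g_{r+k+1}$ and to $d_{r-k-1}^2h_{r-k-1}+d_{r+k+1}^2h_{r+k+1}$. Now $c_{k+1}=d_{r-k-1}^2-d_{r+k+1}^2\neq 0$, so I can form the same four linear combinations as in Theorem~\ref{thm:1} to peel off $x_{r\pm(k+1)}$, $y_{r\pm(k+1)}$ and hence $h_{r\pm(k+1)}=\tfrac12[x_{r\pm(k+1)},y_{r\pm(k+1)}]$ individually. From this point the one-sided recursions closing the previous proof apply: using $d_n\neq 0$, bracketing the isolated $h_{r\pm(k+1)}$ against the surviving $d_n y_n$ terms of the reduced drift $V_0^{(j)}$ walks the individual generators \emph{inward} to fill positions $r\pm 1,\dots,r\pm k$ and \emph{outward} to the chain ends, yielding every $x_n,y_n,h_n$ and so $\LL=\su(N)$.

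The main obstacle is the inductive step of the second paragraph: one must track the nested commutators carefully enough to confirm that the locked structure genuinely survives all $k$ intermediate levels, i.e.\ that the left and right coefficients remain equal in magnitude precisely because $d_{r-j}^2=d_{r+j}^2$ there, and that no step covertly demands the asymmetry we have deferred to level $k+1$. A subsidiary bookkeeping point is to order the extractions so that each coupling term $d_n y_n$ is still present in the successively reduced drift at the moment it is needed, both for the symmetric climb out to level $k+1$ and for the inward/outward fill-in afterwards; conceptually this whole obstruction is the statement that the chain carries no reflection symmetry about the actuated bond, which is what $d_{r-k-1}^2\neq d_{r+k+1}^2$ encodes.
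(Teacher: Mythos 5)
Your proposal follows essentially the same route as the paper's own proof: the paper likewise extracts $x_r$, $y_r$, $h_r$ using only $\omega_r\neq 0$, then iterates exactly the locked-pair recurrences you describe (producing $X_j^{(0)}=d_{r-j}x_{r-j}+d_{r+j}x_{r+j}$, $Y_j^{(0)}=d_{r-j}y_{r-j}+d_{r+j}y_{r+j}$, $Z_j^{(0)}=d_{r-j}^2h_{r-j}+d_{r+j}^2h_{r+j}$, with $Z_j^{(1)}=h_{r-j}+h_{r+j}$ precisely because $d_{r-j}^2=d_{r+j}^2$), splits the pair at level $k+1$ using $d_{r-k-1}^2\neq d_{r+k+1}^2$, and finally fills in the generators inward for $j=1,\ldots,k$ and outward to the chain ends via reduced drifts $V_0^{(j)}$ and $W_0^{(j)}$, just as you outline. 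The inductive climb you flag as the main obstacle is exactly the paper's displayed recurrence and closes as you anticipate, since only even powers of the intermediate couplings enter the locking coefficients, so the symmetric form survives each level.
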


\begin{proof}
For $k=0$, i.e., if $d_{r-1}^2 \neq d_{r+1}^2$, the result follows from 
Thm~\ref{thm:1}.   If $d_{r-1}^2=d_{r+1}^2$ we begin as in the proof of 
Thm~\ref{thm:1} to conclude that $y_r\in\LL$, 
$x_r=(3 \omega_r)^{-1}(X_0+X_0')\in\LL$ and $h_r=2^{-1}[x_r,y_r]\in\LL$, 
and set
\begin{align*}
 V_0^{(0)} &\equiv iV_0-d_ry_r\\
 Y_1^{(0)} &\equiv 3^{-1}(4Y_0+Y_0')=d_{r-1}y_{r-1}+d_{r+1} y_{r+1}\\
 X_1^{(0)} &\equiv [[x_r,Y_1^{(0)}],y_r]=d_{r-1}x_{r-1}+d_{r+1} x_{r+1}\\
 Z_1^{(0)} &\equiv 2^{-1}[X_1^{(0)},Y_1^{(0)}]=d_{r-1}^2h_{r-1} +d_{r+1}^2h_{r+1}\\
 V_0^{(1)} &\equiv V_0^{(0}-Y_1^{(0)}=iH_0-\sum_{n \in I^{(1)}}d_n y_n
\end{align*}
where $I^{(1)}$ is the index set $\{1, \dots, N-1\}$ minus the subset 
$\{r-1,r,r+1\}$.

Setting $d_{r+j}^2= d_{r-j}^2$ for $j=1 \dots k-1$ and observing that we 
cannot separate the $r+1$ to $r+k$ and $r-1$ to $r-k$ terms, respectively, 
at this stage we continue along similar lines by iterating the following 
set of recurrence relations  for $j=1, \dots, k-1$
\begin{align*}
Z_j^{(1)}&\equiv d_{r-j}^{-2} Z_1^{(0)} \\
         &= h_{r-j}+h_{r+j} \textrm{ as } d_{r+j}^2= d_{r-j}^2 \neq 0\\
X_j^{(1)}&\equiv [Y_j^{(0)},V_0^{(j)}] \\
         &= d_{r-j} d_{r-j-1} x_{r-j-1,r-j+1} -d_{r-j}\omega_{r-j}h_{r-j}\\
         &\quad -d_{r+j}d_{r+j+1}x_{r+j,r+j+2}-d_{r+j}\omega_{r+j}h_{r+j}\\
Y_j^{(1)}&\equiv [X_j^{(1)},Y_j^{(0)}] \\
         &= d_{r-j}^2 d_{r-j-1} y_{r-j-1}   -2d_{r-j}^2\omega_{r-j}h_{r-j}\\
         &\quad +d_{r+j}^2d_{r+j+1}y_{r+j+1}-2d_{r+j}^2\omega_{r+j}h_{r+j}\\
Y_j^{(2)}&\equiv d_{r-j}^{-2} Y_j^{(1)} \\
         &= d_{r-j-1}y_{r-j-1}-2\omega_{r-j}h_{r-j}\\
         &\quad + d_{r+j+1}y_{r+j+1} -2\omega_{r+j}h_{r+j}\\
X_{j+1}^{(0)}&\equiv [Z_j^{(1)},Y_j^{(2)}] \\
             &= d_{r-j-1} x_{r-j-1} + d_{r+j+1}x_{r+j+1}\\
Y_{j+1}^{(0)}&\equiv [X_{j+1}^{(0)},Z_j^{(1)}]\\ 
             &= d_{r-j-1}y_{r-j-1} + d_{r+j+1} y_{r+j+1}\\
Z_{j+1}^{(0)}&\equiv 2^{-1}[X_{j+1}^{(0)},Y_{j+1}^{(0)}]\\ 
             &= d_{r-j-1}^2h_{r-j-1}+d_{r+j+1}^2 h_{r+j+1}\\
V_0^{(j+1)}  &\equiv V_0^{(j)}-Y_{j+1}^{(0)}\\
             & =iH_0-\sum_{n \in I^{(j+1)}}d_ny_n
\end{align*}
where $I^{(j+1)}$ is the index set $I^{(j)}$ with the subset 
$\{r-j-1,r+j+1\}$ removed.  Since $d_{r-k-1}^2 \neq d_{r+k+1}^2$ and
\begin{align*}
 X_{k}^{(0)}&= d_{r-k}x_{r-k}  +d_{r+k} x_{r+k}\\
 Y_{k}^{(0)}&= d_{r-k}y_{r-k}  +d_{r+k} y_{r+k}\\
 Z_{k}^{(0)}&= d_{r-k}^2h_{r-k}+d_{r+k}^2 h_{r+k}\\
 V_0^{(k)}  &= iH_0-\sum_{n\in I^{(k)}}d_ny_n
\end{align*}
where $I^{(k)}$ is the index set $\{1, \dots, r-k,r+k, \dots N \}$.

To complete the proof by showing that $y_{r \pm (k+1)}$, $x_{r \pm (k+1)}$
and $h_{r \pm (k+1)}$ are in $\LL$, we calculate the commutators 
\begin{align*}
X_k^{(1)} &\equiv [Y_k^{(0)},V_0^{(k)}] \\
          &= d_{r-k}d_{r-k-1}x_{r-k-1,r-k+1}    -d_{r-k}\omega_{r-k} h_{r-k}\\
          &\quad -d_{r+k}d_{r+k+1}x_{r+k,r+k+2} -d_{r+k}\omega_{r+k} h_{r+k}\\
Y_k^{(1)} &\equiv [X_k^{(1)},Y_k^{(0)}] \\
          &= d_{r-k}^2d_{r-k-1}y_{r-k-1}        -2d_{r-k}^2\omega_{r-k}h_{r-k}\\
          &\quad +d_{r+k}^2d_{r+k+1}y_{r+k+1}   -2d_{r+k}^2\omega_{r+k}h_{r+k}\\
Z_k^{(1)} &\equiv d_{r-k}^{-2} Z_k^{(0)} \\
          &= h_{r-k}+h_{r+k} \textrm{ as } d_{r-k}^2=d_{r+k}^2
\end{align*}
\begin{align*}
X_k^{(2)} &\equiv [Z_k^{(1)},Y_k^{(1)}] \\
          &= d_{r-k}^{-2}d_{r-k-1}x_{r-k-1}+d_{r+k}^{-2}d_{r+k+1}x_{r+k+1}\\
X_k^{(3)} &\equiv d_{r-k}^{-2} X_k^{(2)}\\
          &= d_{r-k-1}x_{r-k-1}+d_{r+k+1}x_{r+k+1}\\ 
Y_k^{(2)} &\equiv [X_k^{(3)},Z_k^{(1)}] \\
          &= d_{r-k-1}y_{r-k-1}+d_{r+k+1}y_{r+k+1}\\
Z_k^{(2)} &\equiv 2^{-1}[X_k^{(3)},Y_k^{(2)}] \\
          &= d_{r-k-1}^2h_{r-k-1}+d_{r+k+1}^2h_{r+k+1}\\
Y_k^{(3)} &\equiv 2^{-1}[Z_k^{(2)},X_k^{(3)}]\\
          &= d_{r-k-1}^3y_{r-k-1}+d_{r+k+1}^3y_{r+k+1}
\end{align*}
which gives
\begin{align*}
y_{r \pm (k+1)} &= \frac{Y_k^{(3)}-d_{r \mp (k+1)}Y_k^{(2)}}
                   {d_{r \pm (k+1)}(d_{r \pm (k+1)}^2 - d_{r \mp (k+1)}^2)}\\
x_{r \pm (k+1)} &\equiv [y_{r \pm (k+1)},Z_k^{(2)}]/(2d_{r \pm (k+1)}^2)\\
h_{r \pm (k+1)} &\equiv [x_{r \pm (k+1)},y_{r \pm (k+1)}]/2 
\end{align*}
showing that these generators are in $\LL$.  To show that the generators
$x_{r \pm j}$, $y_{r \pm j}$, and $h_{r \pm j}$ are in $\LL$ for 
$j = 1,\ldots, k$ we set
\begin{align*}
V_0^{(k+1)} &= V_0^{(k)}-d_{r-k-1}y_{r-k-1}-d_{r+k+1}y_{r+k+1}\\
            &= iH_0-\sum_{n \in I^{(k+1)}}d_n y_n,
\end{align*}
where $I^{(k+1)}$ is the index set $I^{(k)}$ with the subset 
$\{r-k-1,r+k+1\}$ removed, and note that
\begin{align*}
x_{r\pm j}  &\equiv d_{r \pm j}^{-1}[[y_{r \pm (j+1)},X_j^{(0)}],y_{r \pm (j+1)}]\\
y_{r\pm j} &\equiv [x_{r \pm (j+1)},[x_{r \pm j}, y_{r \pm (j+1)}]]\\
h_{r\pm j} &\equiv 2^{-1}[x_{r \pm j}, y_{r \pm j}]
\end{align*}
Finally, we show that the generators $x_{r+k+j}$, $y_{r+k+j}$ and
$h_{r+k+j}$ are in $\L$ for $j=2\dots N-r-k-1$, by iterating the
following set of recurrence relations for $j=2 \dots N-r-k-1$:
\begin{align*}
x_{r+k+j}  &=d_{r+k+j}^{-1}[h_{r+k+j-1},V_0^{(k+j-1)}]\\
y_{r+k+j}  &=[x_{r+k+j},h_{r+k+j-1}]\\
h_{r+k+j}  &= 2^{-1}[x_{r+k+j},y_{r+k+j}]\\
V_0^{(k+j)}&=V_0^{(k+j-1)}-d_{r+k+j}y_{r+k+j}
\end{align*}
Similarly, we show that the elements $x_{r-k-j}$, $y_{r-k-j}$ and 
$h_{r-k-j}$ are in $\L$ for $j = 2, \dots, r-k-1$, by setting 
$W_0^{(k-1)}=V_0^{(k+1)}$ and iterating the following recurrence 
relations for $j=2 \dots r-k-1$:
\begin{align*}
x_{r-k-j}  &= d_{r-k-j}^{-1}[h_{r-k-j+1},W_0^{(k-j+1)}]\\
y_{r-k-j}  &= [x_{r-k-j},h_{r-k-j+1}]\\
h_{r-k-j}  &= 2^{-1}[x_{r-k-j},y_{r-k-j}]\\
W_0^{(k-j)}&= W_0^{(k-j+1)}-d_{r-k-j}y_{r-k-j}
\end{align*}
We have now shown that $x_j$, $y_j \in \LL$ for $j=1 \dots N-1$ as
desired, completing the proof.
\end{proof}

We note that $d_{r-k-1}^2\neq d_{r+k+1}^2$ for some integer $k$ is always 
satisfied if the system dimension is odd, $N=2\ell+1$, no matter where
we place the actuator.  If $N=2\ell$ then $d_{r-k-1}^2=d_{r+k+1}^2$ for 
all $k$ is possible only if $r=\ell$, i.e., if the actuator is placed in 
the middle, and the coupling constants are symmetric around the centre, 
$d_{\ell-k}^2=d_{\ell+k}^2$ for all $k$.  For a spin chain with strictly 
isotropic Heisenberg interaction the requirement $\omega_r\neq 0$ is still
a problem if $d_{r-1}=d_{r+1}$ but controllability could be restored by 
engineering a local perturbation of the energy levels in the vicinity of 
the actuator, which may indeed achieved by the actuator itself. 

We can interpret these results in terms of transition graphs.  Given a 
system with a tridiagonal drift Hamiltonian $\op{H}_0$ with respect to 
some Hilbert space basis $\{\ket{n}:1,\ldots,N\}$, we can define a 
transition graph as before by taking the $N$ basis states as vertices 
and adding edges for each non-zero transition.  Since the Hamiltonian is 
tridiagonal the resulting graph is either a linear chain or disconnected.
Connectedness is a necessary condition, and the results above guarantee
controllability in the following cases.
\begin{itemize}
\item If the chain is connected and has odd length then the actuator can
      be placed anywhere provided the vertices associated with the
      controlled edge have different energy levels.
\item If the chain is connected and has even length then we must ensure 
      in addition that the system does not admit symplectic symmetry, 
      which is guaranteed as long as the actuator is not placed
      precisely in the middle of the chain.
\item For a uniform chain for which the energy levels in the interior
      of the chain are always degenerate, controllability is ensured by 
      placing the actuator near either end of the chain.
\end{itemize}
It is worth pointing out here that the transition frequencies of the
system need not be distinct.  In fact, the most of the energy levels 
can be degenerate as is usually the case for uniform spin chains.  We 
only require that the vertices of the controlled transition have
different energy levels, which can generally be achieved by placing 
the actuator near the end of chain.

\begin{figure}
\scalebox{0.4}{\includegraphics{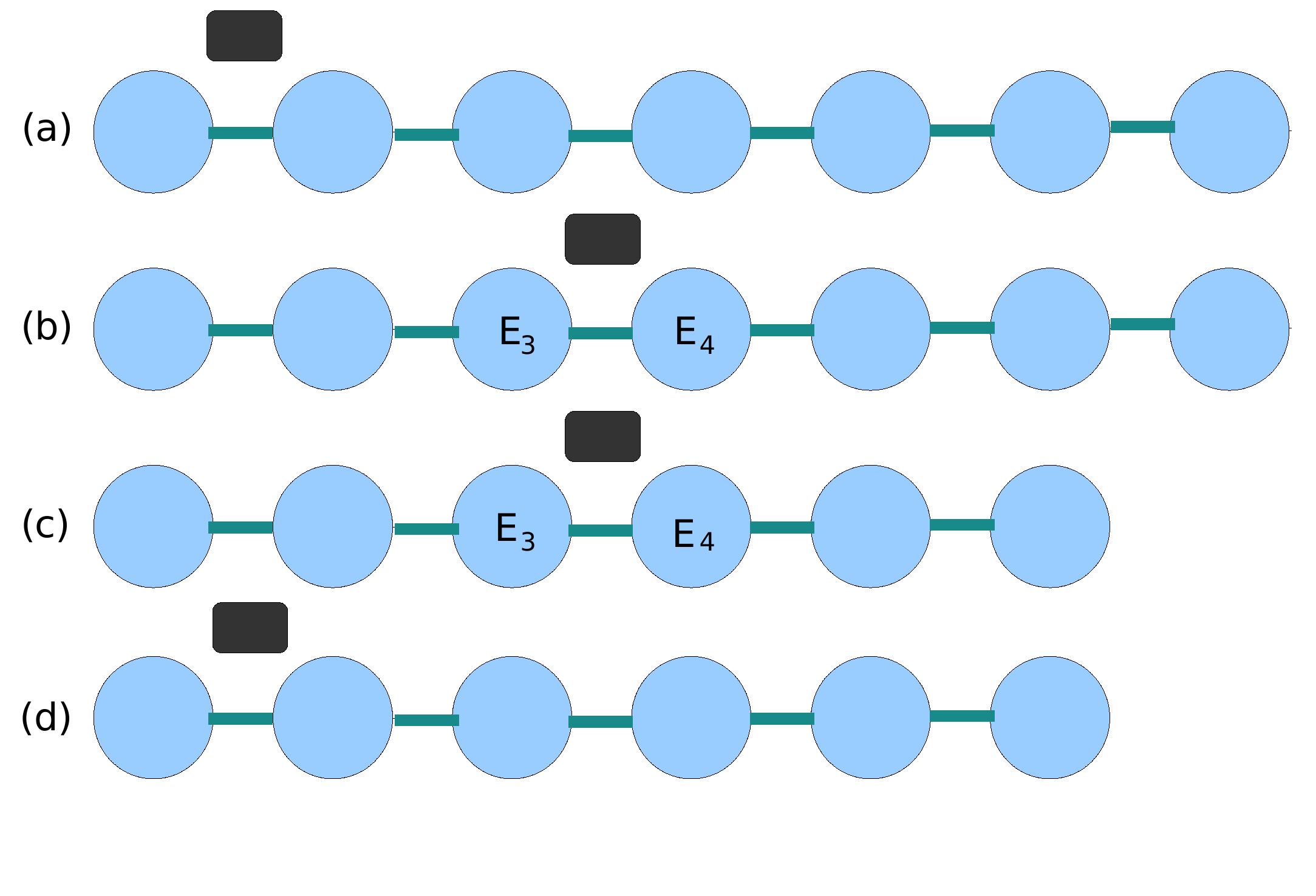}}
\caption{Controllability of a chain with single actuator: model systems
(a) and (d) are controllable, (b) is controllable if $E_3\neq E_4$, while
(c) is not controllable even if $E_3\neq E_4$ if chain has reflection 
symmetry, i.e., $|d_{2\ell-k}|=|d_k|$ and $E_{2\ell-k+1}=E_k$ for all 
$k$ as the associate dynamical Lie group has symplectic symmetry.}
\label{fig:2}
\end{figure}

\section{Constructive Control with Single Binary Switch Actuator}

The results of the previous section suggest that a single actuator is 
often sufficient to achive the same degree of controllability that is
achievable with many local actuators.  This result is not too surprising
on purely Lie algebraic grounds considering that two randomly choosen 
Hermitian $N \times N$ matrices, generically, will generate the entire 
Lie algebra $\uu(N)$.  Of course, the Hamiltonian matrices in our model 
system are far from random, and our Lie algebra calculations indeed show 
that for certain systems such as a spin chain with uniform, isotropic 
Heisenberg coupling between adjacent spins, controllability depends on 
the type of interaction and the placement of the actuator.  Nonetheless,
that a single local actuator in many cases results in the same degree of 
controllability than, say, $2N-1$ local actuators to individually control 
all of the energy levels and transitions, is rather surprising when one 
considers the substantially reduced control that such an actuator affords,
and it begs the question whether it is possible to control such a system 
constructively, i.e., whether we can find a local control field $f(t)$ 
that achieves the desired global system dynamics, and if there exists a
solution, whether it can be practically implemented.

The type of control functions that are feasible generally depends on the
specifics of the system and actuator.  For laser-controlled quantum
dots, for example, the availability of pulse shaping technology and the
demonstrated superiority of shaped pulses over simple pulses in certain
settings, suggests optimization routines designed to find an optimal
time-dependent pulse shape $f(t)$, and many such algorithms based on
gradients and variational techniques have been proposed (see for
example~\cite{JCP120p5509,PRA61n012101,JMR172p296}).  For many other
systems, especially voltage gate controlled systems, however, it is
generally difficult to implement complicated time-varying potentials,
and simple, piecewise constant controls that can be approximated by
square pulses are preferable.  In the following we consider the simplest
type of such an actuator, a binary switch that switches the voltage
between two possible values, corresponding to two fixed Hamiltonians
\begin{subequations}
\begin{align}
  H^{(1)} &= A_0 + f_0 A_1, \\
  H^{(2)} &= A_0 + f_1 A_1. 
\end{align}
\end{subequations}
Given a sequence of switching times $\vec{t}=(t_1,\ldots,t_K)$ the 
corresponding evolution of the system is given by
\begin{equation}
 U(\vec{t}) = U^{(1)}(t_1) U^{(2)}(t_2) \ldots U^{(1)}(t_{K-1}) U^{(2)}(t_K)
\end{equation}
where $U^{(m)}(t_k) = \exp(-i t_k H^{(m)})$ for $m=1,2$.  The control
task in this case is reduced to find the switching times $\vec{t}$ to
accomplish a desired task.  Although analytical expressions for the
optimal switching times are generally very difficult to obtain for all
but very simple systems, numerical optimization techniques can be used 
to find suitable controls, and we have found them to be surprisingly 
effective in many cases.

As a specific example, we consider the first excitation subspace of a
spin chain of length four with a single binary switch actuator placed 
between spins one and two, i.e., $r=1$.  This system is controllable
with a single actuator at $r=1$ according to Theorem~\ref{thm:1}.  To
show that we can constructively control this system with a single binary 
switch actuator, we find switching time sequences for a complete set of 
generators of $\SU(4)$.  Interpreting the first excitation subspace 
of the chain as a two-qubit system by setting 
\begin{equation*}
 \ket{0}=\ket{00}, \; \ket{1}=\ket{01}, \; \ket{2}=\ket{10}, \; \ket{3}=\ket{11},
\end{equation*} 
we show that it is possible to find vectors $\vec{t}^{(s)}$ such that
\begin{equation}
 \norm{U_T^{(s)} - U(\vec{t}^{(s)})} \le 10^{-4}
\end{equation}
for the following set of six target operators
\begin{equation}
 \label{eq:targ}
  U_T^{(s)} \in \{I\otimes I, \; \Had \otimes I, \; T\otimes I, \; 
  I\otimes \Had, \; I\otimes T, \; \CNOT\},
\end{equation}
where $I$ is identity operator on a single two-level subspace (qubit),
$T=\exp(-i\pi/8\sz)$ is a $\pi/8$ phase gate, and $\Had$ and $\CNOT$ are
the Hadamard and $\CNOT$ gate, respectively,
\begin{equation*}
 \Had = \frac{1}{\sqrt{2}}\begin{pmatrix}1 &-1\\ 1 & 1 \end{pmatrix},
 \quad
 \CNOT= e^{-i\pi/4}\begin{pmatrix} I_2 & 0 \\ 0 & \sx \end{pmatrix}, 
\end{equation*} 
with $\sx$ and $\sz$ being the usual Pauli matrices
\begin{equation*}
  \sx = \begin{pmatrix} 0 & 1 \\ 1 & 0  \end{pmatrix}, \quad
  \sz = \begin{pmatrix} 1 & 0 \\ 0 & -1 \end{pmatrix}.
\end{equation*}
The set of target operators~(\ref{eq:targ}) was chosen because it is a 
universal set of elementary gates in that any other $\SU(4)$ operator can
be constructed from these elementary gates, and the ability to implement 
a universal set of gates for $\SU(4)$ is equivalent to (density operator)
controllability of the system.

Table~\ref{table:1} shows the time vectors $\vec{t}^{(s)}$, as well as the 
gate operation times $T=\sum_{k=1}^K t_k$ and gate errors as defined above 
for a system with uniform isotropic Heisenberg coupling.  The optimization 
was performed using a Nelder-Mead downhill simplex algorithm~\cite{simplex} 
with multiple initial simplices.  The table shows that it is possible to 
implement all of the six elementary gates with a fidelity $\ge 99.99$\% with 
no more than $20$ switches of a single on-off switch actuator in approximately 
$40$ time units each, a surprisingly good result considering the minimal
nature of the available control.  Since  solutions are obviously not unique,
and the minimum control time or number of switches required to achieve the 
control objective are unknown, even better solutions probably exist.  The 
non-uniqueness of the solutions can be exploited to satisfy additional 
constraints such as minimum pulse lengths (switching cannot be arbitrarily
fast) etc.

\begin{table*}
\begin{center}
\begin{tabular}{lcccccc}
  & $I\otimes I$ 
  & $\mbox{Had} \otimes I$ 
  & $T\otimes I$ 
  & $I\otimes \mbox{Had}$  
  & $I\otimes T$ 
  &  \mbox{CNOT} \\\hline
error & 6.02407e-05 & 9.93462e-06 & 9.41944e-10 & 8.86637e-07 & 1.10773e-06 & 1.31773e-06\\
duration & 40.5351 & 37.9537 & 41.166 & 41.1328 & 42.5368 & 39.3569\\\hline
$t_1$ & 0.731996 & 3.94518 & 1.79446 & 3.08601 & 1.30764 & 3.34576\\
$t_2$ & 2.03884 & 2.20021 & 1.79932 & 3.13305 & 1.1069 & 0.0179813\\
$t_3$ & 3.52727 & 0.0384191 & 0.0730935 & 0.701478 & 0.518925 & 2.59171\\
$t_4$ & 1.38628 & 1.07432e-07 & 1.71885 & 3.62498 & 5.85085 & 3.23448\\
$t_5$ & 3.39919 & 0.680856 & 2.07051 & 2.45712 & 0.396729 & 1.46693\\
$t_6$ & 0.951534 & 3.04816 & 0.747468 & 0.68558 & 7.37392 & 0.212212\\
$t_7$ & 1.35113 & 1.292 & 1.84047 & 1.3746 & 1.13031 & 5.0851\\
$t_8$ & 0.575672 & 1.86256 & 2.53341 & 1.12801 & 1.16712 & 3.07975\\
$t_9$ & 3.38307 & 4.14879 & 4.73792 & 3.50997 & 0.802765 & 2.75667\\
$t_{10}$ & 0.0365974 & 0.356856 & 1.3432 & 1.92944 & 4.08279 & 0.439889\\
$t_{11}$ & 3.62131 & 1.02202 & 1.39084 & 5.57909 & 1.27132 & 3.25423\\
$t_{12}$ & 0.93505 & 0.0453206 & 0.320722 & 0.298252 & 2.70023 & 2.41685\\
$t_{13}$ & 1.75377 & 2.13701 & 4.15595 & 0.987279 & 4.67647 & 1.04768\\
$t_{14}$ & 5.19515 & 1.24291 & 0.533115 & 0.26934 & 0.705919 & 1.31426\\
$t_{15}$ & 4.5099 & 0.101593 & 1.03574 & 1.7998 & 1.01477 & 2.6859\\
$t_{16}$ & 1.01899 & 4.40131 & 7.58673 & 4.66334 & 1.78438 & 0.732592\\
$t_{17}$ & 4.01314 & 1.07241 & 4.77061 & 0.135612 & 1.02283 & 0.16703\\
$t_{18}$ & 0.991019 & 5.83516 & 0.857316 & 1.31499 & 1.02426 & 0.770284\\
$t_{19}$ & 0.705316 & 1.6229 & 1.7735 & 3.38444 & 2.16687 & 2.32287\\
$t_{20}$ & 0.409887 & 2.89999 & 0.082803 & 1.07044 & 2.43177 & 2.41471\\
\end{tabular}
\end{center}
\caption{Gate errors ($1-$gate fidelity), total time $T$ required to
 implement respective gates, and vector of switching times $t_k$ to 
 implement a universal set of elementary gates with 20 switches for
(the first excitation subspace of) a uniform isotropic Heisenberg spin 
chain of length four.}
\label{table:1}
\end{table*}

\section{Conclusion}

We have shown that a certain class of systems of Hilbert space dimension
$N$ is controllable with a single local actuator.  In particular, the results show that it is usually 
not necessary to be able to control all transitions, and a single local 
actuator in fact suffices in most cases to achieve controllability. That is to say, we do not require $N-1$ or
more independent local actuators, or global actuators acting on the
entire system.  

The results establish theoretical minimum requirements 
for controllability for a class of systems that includes many types of 
spin chains and other systems with non-trivial fixed couplings between 
adjacent elements.  As systems with fixed interactions are generally much easier to engineer 
than systems with individually tunable transitions, this is a promising 
result.

Although the controllability proof is an existence proof, we have further 
demonstrated that is is possible to constructively control a system with 
a local actuator, even if the actuator is limited to a binary switch, for 
a four-level system, where we have shown that it is possible to implement 
a complete set of generators of $\SU(4)$ with fidelities $\ge 99.99$\% 
using a single binary switch actuator, with no more than 20 switches per 
gate required.  Although it would be desirable to have analytic expressions 
for the switching times, it appears that numerical optimization techniques 
are quite effective in finding suitable controls.

Numerical simulations extending the technique to systems with non-tridiagonal Hamiltonians suggest that constructive 
control is generally still possible, and similar strong controllability 
results can almost certainly be obtained using very similar arguments for 
these systems.  This class would include for example interesting systems such as spin-chains with non-nearest neighbour couplings. Beyond the extension of generic controllability results 
to other classes of Hamiltonians, interesting questions for future work---%
the answers to which could point the way to achieving effective control
with much simpler control system designs---include what type of systems 
can be effectively controlled with a single local actuator, whether the 
placement of the actuator matters, and how different forms of coupling between the system and the actuator affect the control outcomes in practice.

\acknowledgments

SGS acknowledges support from an EPSRC Advanced Research Fellowship,
the \mbox{EPSRC QIP IRC} and Hitachi, and is currently also a Marie
Curie Fellow under the European Union Knowledge Transfer Programme 
MTDK-CT-2004-509223  PJP is supported by an \mbox{EPSRC} Project 
Studentship.

\bibliography{/home/sonia/archive/bibliography/References}
\bibliographystyle{prsty}
\end{document}